\newcommand*{\var}[1]{\mathord{\mathit{#1}}}
\newcommand*{\fun}[1]{\mathord{\mathit{#1}}}
\newcommand*{\constr}[1]{\mathord{\mathit{#1}}}
\newcommand{\consts}{\mathcal{C}}
\newcommand{\vars}{\mathcal{V}}
\newcommand{\test}[3]{\texttt{=}(\var{#1},\var{#2},\var{#3})}
\newcommand{\actionreq}[3]{\texttt{+}(\var{#1},\var{#2},\var{#3})}
\newcommand{\actionmod}[3]{\texttt{=}(\var{#1},\var{#2},\var{#3})}
\newcommand{\actions}{\mathcal{A}}
\newcommand{\states}{\mathcal{S}}
\newcommand{\utrans}{\rightarrowtail}
\newcommand{\rules}{\Sigma}
\newcommand{\addinfo}{\upsilon}
\newcommand{\types}{\mathbb{T}}
\newcommand{\buffers}{\mathbb{B}}
\newcommand{\val}{\fun{val}}
\newcommand{\id}{\fun{id}}
\newcommand{\inv}{\mathcal{I}}
\newcommand{\actrinv}{\mathcal{A}}
\begin{document}

\title{A Decidable Confluence Test for Cognitive Models in ACT-R}
%
%
\author{Daniel Gall \and Thom Frühwirth}
\authorrunning{D. Gall \and T. Frühwirth} 
\institute{Institute of Software Engineering and Programming Languages, Ulm University, 89069~Ulm, Germany,\\
\email{\{daniel.gall,thom.fruehwirth\}@uni-ulm.de}}

\maketitle              

\begin{abstract}
Computational cognitive modeling investigates human cognition by building detailed computational models for cognitive processes. Adaptive Control of Thought -- Rational (ACT-R) is a rule-based cognitive architecture that offers a widely employed framework to build such models. There is a sound and complete embedding of ACT-R in Constraint Handling Rules (CHR). Therefore analysis techniques from CHR can be used to reason about computational properties of ACT-R models. For example, confluence is the property that a program yields the same result for the same input regardless of the rules that are applied.

In ACT-R models, there are often cognitive processes that should always yield the same result while others e.g. implement strategies to solve a problem that could yield different results.
In this paper, a decidable confluence criterion for ACT-R is presented.
It allows to identify ACT-R rules that are not confluent.
Thereby, the modeler can check if his model has the desired behavior.

The sound and complete translation of ACT-R to CHR from prior work is used to come up with a suitable invariant-based confluence criterion from the CHR literature. Proper invariants for translated ACT-R models are identified and proven to be decidable. The presented method coincides with confluence of the original ACT-R models. 

\keywords{computational cognitive modeling, confluence, invariants, ACT-R, Constraint Handling Rules}
\end{abstract}

\section{Introduction}

Computational cognitive modeling is a research field at the interface of cognitive sciences and computer science. It tries to explain human cognition by building detailed computational models of cognitive processes \cite{sun_introduction_2008}. To support the modeling process, cognitive architectures like \emph{Adaptive Control of Thought -- Rational (ACT-R)} provide the ability to create models of specific cognitive tasks by offering representational formats together with reasoning and learning mechanisms to facilitate modeling \cite{taatgen_modeling_2006}. 

ACT-R is widely employed in the field of computational cognitive modeling. It is defined as a production rule system that offers advanced conflict resolution mechanisms to model learning and competition of different strategies for problem solving. Therefore, many ACT-R models are highly non-deterministic to resemble the applicability of more than one strategy in many situations. The strategy is chosen depending on information learned from situations in the past.

\emph{Confluence} is the property of a program that regardless of the order its rules are applied, they finally yield the same result. By identifying the rules that lead to non-confluence, model quality can be improved: It allows to check if the model has the desired behavior regarding competing strategies and e.g. identify rules that interfere with each other unintentionally.

In this paper, we present a \emph{decidable confluence test for the abstract operational semantics of ACT-R} using confluence analysis tools for CHR. In prior work, we presented a sound and complete embedding of ACT-R in CHR \cite{gall_ruleml2016,gall_tocl_2017}. An \emph{invariant-based confluence test for CHR} \cite{duck_observable_2007,raiser_phdthesis10} is used to decide confluence of the translated models with invariants on CHR states that come from the abstract operational semantics of ACT-R. The confluence test identifies the rules that lead to non-confluence supporting the decision if a model has the desired behavior regarding competing strategies.

First the preliminaries are recapitulated in section~\ref{sec:preliminaries}. The main section~\ref{sec:confluence_criterion} describes the confluence criterion for ACT-R models. For this purpose, the invariant-based confluence test for CHR is introduced briefly (section~\ref{sec:confluence_criterion:invariant_confluence}). Then, the ACT-R invariant is defined and a \emph{decidable criterion} for the invariant is given (section~\ref{sec:confluence_criterion:actr_invariant}). It is shown that the ACT-R invariant is maintained in the translation. The theoretical foundations to apply the CHR invariant-based confluence test to ACT-R models are derived resulting in a \emph{confluence criterion for terminating ACT-R models} (section~\ref{sec:confluence_criterion:invariant_confluence_actr}). An example is given in section~\ref{sec:example:counting}. 

\section{Preliminaries}
\label{sec:preliminaries}

\subsection{Confluence}

Confluence is the property of a state transition system that same inputs yield the same results regardless of which rules are applied. 
\begin{definition}[joinability and confluence \cite{fru_chr_book_2009}]
In a state transition system $(\states,\mapsto)$ with states $\states$ and a transition relation $\mapsto : \states \times \states$ with reflexive transitive closure $\mapsto^*$, two states $\sigma_1, \sigma_2 \in \states$ are \emph{joinable}, denoted as $\sigma_1 \downarrow \sigma_2$, if there exists a state $\sigma'$ such that $\sigma_1 \mapsto^* \sigma'$ and $\sigma_2 \mapsto^* \sigma'$.  A state transition system is \emph{confluent}, if for all states $\sigma, \sigma_1, \sigma_2: (\sigma \mapsto^* \sigma_1) \land (\sigma \mapsto^* \sigma_2) \rightarrow (\sigma_1 \downarrow \sigma_2).$

\end{definition}
Hence, a program is confluent if for all states that lead to different successor states, those states are joinable. A program is \emph{locally confluent}, if $(\sigma \mapsto \sigma_1) \land (\sigma \mapsto \sigma_2)$ in one transition step and $\sigma_1$ and $\sigma_2$ are joinable. It can be shown that for all state transition systems local confluence and confluence are equivalent \cite{fru_chr_book_2009}. Figure~\ref{fig:confluence} illustrates (local) confluence. 

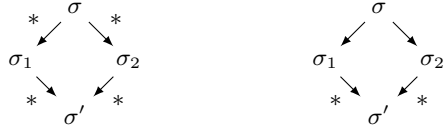
\begin{figure}[hbt]
 \centering
  \begin{tikzpicture}
\node (s) {$\sigma$};
\node[below left of=s] (s1) {$\sigma_1$};
\node[below right of=s] (s2) {$\sigma_2$};
\node[below right of=s1] (sp) {$\sigma'$};

\path[draw,-latex] (s) -- node[above left] {$*$} (s1);
\path[draw,-latex] (s) -- node[above right] {$*$} (s2);
\path[draw,-latex] (s1) -- node[below left] {$*$} (sp);
\path[draw,-latex] (s2) -- node[below right] {$*$} (sp);

\node[right of=s, node distance=4cm] (q) {$\sigma$};
\node[below left of=q] (q1) {$\sigma_1$};
\node[below right of=q] (q2) {$\sigma_2$};
\node[below right of=q1] (qp) {$\sigma'$};

\path[draw,-latex] (q) --  (q1);
\path[draw,-latex] (q) --  (q2);
\path[draw,-latex] (q1) -- node[below left] {$*$} (qp);
\path[draw,-latex] (q2) -- node[below right] {$*$} (qp);
\end{tikzpicture}

\caption{Confluence and local confluence.}
\label{fig:confluence}
\end{figure}

\subsection{Adaptive Control of Thought -- Rational (ACT-R)}

In this section, ACT-R is introduced briefly. An extensive introduction to the theory can be found in \cite{anderson_integrated_2004,taatgen_modeling_2006}.
ACT-R is a modular production rule system. Its data elements are so-called \emph{chunks}. A chunk has a \emph{type} and a set of \emph{slots} (determined by the type) that are connected to other chunks. Hence, human declarative knowledge is represented in ACT-R as a network of chunks. Figure~\ref{fig:chunk_count_order} shows an example chunk network that models the representation of an order over natural numbers.

\begin{figure}[htb]
\centering
\tikzstyle{chunk} = [circle, draw, text centered, text width=1em]
\tikzstyle{slot} = [draw, -latex]   

\begin{tikzpicture}[node distance = 2cm, auto]
 \node[chunk] (b) {$a$}; 
 \node[chunk, left of=b] (b-first) {1}; 
 \node[chunk, right of=b] (b-second) {2}; 

 \node[chunk, right of=b-second] (c) {$b$}; 
 \node[chunk, right of=c] (c-second) {3}; 

 \path[slot] (b) -- node {first} (b-first);
 \path[slot] (b) -- node {second} (b-second);
 \path[slot] (c) -- node {first} (b-second);
 \path[slot] (c) -- node {second} (c-second);
\end{tikzpicture}
\caption{A chunk network that represents the order of natural numbers $1, 2, 3$. The chunks are represented by nodes, the slots by labeled edges. The labels of the nodes are chunk identifiers. Chunks $1, 2$ and $3$ are of type \emph{number} that has no slots. Chunks $a$ and $b$ are of type \emph{order} that has a \emph{first} and a \emph{second} slot.}
\label{fig:chunk_count_order}
\end{figure}
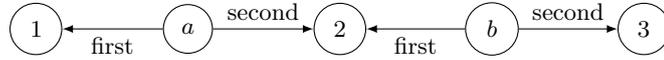

ACT-R's modules are responsible for different cognitive features. For instance, the declarative knowledge (represented as a chunk network) can be found in the \emph{declarative module}. Each module has a set of associated \emph{buffers} that contain at most one chunk. The heart of ACT-R is the \emph{procedural system} that consists of a set of \emph{production rules}. Those rules only have access to the contents of the buffers. They match the contents of the buffer, i.e. they check if the chunks of particular buffers have certain values. If a rule is applicable, it can \emph{modify} particular slots of the chunk in the buffer, \emph{request} the module to put a whole new chunk in its buffer or \emph{clear} a buffer. Modifications and clearings are available directly for the production rule system, whereas requests can take some time while the procedural system is continuing work in parallel.

\subsubsection{Syntax of ACT-R}

We use our simplified syntax in form of first-order terms that can be derived directly from the original syntax \cite{gall_ruleml2016,gall_tocl_2017}.
The syntax of ACT-R is defined over two disjoint sets of constant symbols $\consts$ and variable symbols $\vars$. An ACT-R model consists of a set of types $\types$ with type definitions and a set of rules $\rules$. 

A production rule has the form $\mathcal{L} \Rightarrow R$ where $\mathcal{L}$ is a finite set of buffer tests. A buffer test is a first-order term of the form $\test{b}{t}{P}$ where the buffer $b \in \consts$, the type $t \in \consts$ and $P \subseteq \consts \times (\consts \cup \vars)$ is a set of slot-value pairs $(s,v)$ where $s \in \consts$ and $v \in \consts \cup \vars$. This means that only the values in the slot-value pairs can consist of both constants and variables. 

The right-hand side $R \subseteq \actions$ of a rule is a finite set of actions where $\actions = \{ a(b,t,P) \enspace | \enspace a \in A, b \in \consts, t \in \consts \mbox{ and } P \subseteq \consts \times (\consts \cup \vars) \}$. Hence, an action is a term of the form $a(b,t,P)$ where the functor $a$ of the action is in $A$, the set of action symbols, the first argument $b$ is a constant (denoting a buffer), the second argument is a constant $t$ denoting a type, and the last argument is a set of slot-value pairs, i.e. a pair of a constant and a constant or variable. Usually, the action symbols are defined as $A := \{\mathtt{=},\mathtt{+},\mathtt{-}\}$ for modifications, requests and clearings respectively. Only one action per buffer is allowed, i.e. if $a(b,t,P) \in R$ and $a'(b',t',P') \in R$, then $b \neq b'$ \cite{actr_reference}. 

We assume the rules to be in so-called \emph{set normal form} that requires the slot tests of a rule to be total and unique with respect to the type of the test. This means that each slot defined by the type of the tested chunk must appear at most once in the set of slot-value pairs. Every rule can be transformed to set normal form \cite{gall_tocl_2017}.

\subsubsection{Operational Semantics of ACT-R}

For the understanding of this paper, it is sufficient to define ACT-R states and rules formally. The formal definition of the operational semantics can be found in \cite{gall_ruleml2016,gall_tocl_2017}. We define the operational semantics of ACT-R through our CHR translation that has been first presented in \cite{gall_ruleml2016} and in its most current form in \cite{gall_tocl_2017}. Since the translation is sound and complete, we omit the formal definition of the ACT-R semantics here, since it would only distract from the contribution of this paper.

\begin{definition}[chunk types, chunk stores]
A \emph{typing function} $\tau : \types \rightarrow 2^\consts$ maps each type from the set $\types \subseteq \consts$ to a finite set of allowed slot names. A \emph{chunk store} $\Delta$ is a multi-set of tuples $(t,\val)$ where $t \in \types$ is a chunk type and $\val : \tau(t) \rightarrow \Delta$ is a function that maps each slot of the chunk (determined by the type $t$) to another chunk. Each chunk store $\Delta$ has a bijective \emph{identifier function} $\id_\Delta : \Delta \rightarrow \consts$ that maps each chunk of the multi-set a unique identifier. 
\end{definition}

\emph{Additional information} represents the inner state of the modules and so-called sub-symbolic information used in ACT-R implementations to model cognitive features like forgetting, latencies and conflict resolution. The information is expressed as a conjunction of predicates from first-order logic. We now define ACT-R states as follows:
\begin{definition}[cognitive state,  ACT-R state]
\label{def:unified_state}
A \emph{cognitive state} $\gamma$ is a function $\buffers \rightarrow \Delta \times \mathbb{R}^+_0$ that maps each buffer to a chunk and a delay. 
The delay decides at which point in time the chunk in the buffer is available to the production system. A delay $d > 0$ indicates that the chunk is not yet available to the production system. This implements delays of the processing of requests.  

An \emph{ACT-R state} is a tuple $\langle \Delta; 
\gamma; \addinfo \rangle$ where 
$\gamma$ is a cognitive state and $\addinfo$ is a multi-set of ground, atomic first order predicates (called \emph{additional information}).
\end{definition}

\subsection{Constraint Handling Rules (CHR)}
\label{sec:constraint_handling_rules}

In this section, syntax and semantics of CHR are summarized briefly. For an extensive introduction to CHR, its semantics, analysis and applications, we refer to \cite{fru_chr_book_2009}. We use the latest definition of the state transition system of CHR that is based on state equivalence \cite{raiser_betz_fru_equivalence_revisited_chr09}. The definitions from those canonical sources are now reproduced. 

The syntax of CHR is defined over a set of variables, a set of function symbols with arities and a set of predicate symbols with arities that is disjointly composed of \emph{CHR constraint symbols} and \emph{built-in constraint symbols}. The set of constraint symbols contains at least the symbols $=/2$, $\top/0$ and $\bot/0$. In this paper, we allow the terms to be sets of terms as they can be simply represented as lists in implementations. For a constraint symbol $c/n$ and terms $t_1, \dots, t_n$ over the variables and function symbols, $c(t_1,\dots,t_n)$ is called a \emph{CHR constraint} or a \emph{built-in constraint}, depending on the constraint symbol. We now define the notion of CHR states.
\begin{definition}[CHR state]
A \emph{CHR state} is a tuple $\langle \mathbb{G} ; \mathbb{C} ; \mathbb{V} \rangle$ where the \emph{goal} $\mathbb{G}$ is a multi-set of constraints, the \emph{built-in constraint store} $\mathbb{C}$ is a conjunction of built-in constraints and $\mathbb{V}$ is a set of \emph{global variables}.

All variables occurring in a state that are not global are called \emph{local variables}.
\end{definition}

CHR states can be modified by rules that together form a CHR program. For the sake of brevity, we only consider simplification rules, as they are the only type of rules needed for the understanding of the paper.
\begin{definition}[CHR program]
A \emph{CHR program} is a finite set of rules of the form $r \enspace @ \enspace H \Leftrightarrow G ~|~ B_c , B_b$
 where $r$ is an optional rule name, the heads $H$ are multi-sets of CHR constraints, the guard $G$ is a conjunction of built-in constraints and the body is a multi-set of CHR constraints $B_c$ and a conjunction of built-in constraints $B_b$. If $G$ is empty, it is interpreted as the built-in constraint $\top$. 
\end{definition}

Informally, a rule is applicable, if the head matches constraints from the store $\mathbb{G}$ and the guard holds, i.e. is a consequence of the built-in constraints $\mathbb{C}$. In that case, the constraints matching $H$ are removed and the constraints from $B_c$, $B_b$ and $G$ are added.

In the context of the operational semantics, we assume a constraint theory $\mathcal{CT}$ for the interpretation of the built-in constraints. We define an equivalence relation over CHR states.
\begin{definition}[CHR state equivalence \cite{raiser_phdthesis10,raiser_betz_fru_equivalence_revisited_chr09}]
\label{def:chr_state_equiv}
Let $\rho := \langle \mathbb{G} ; \mathbb{C} ; \mathbb{V} \rangle$ and $\rho' := \langle \mathbb{G}' ; \mathbb{C}' ; \mathbb{V}' \rangle$ be CHR states with local variables $\bar{y},\bar{y}'$ that have been renamed apart. $\rho \equiv \rho'$ if and only if 
\begin{equation*}
\mathcal{CT} \models \forall(\mathbb{C} \rightarrow \exists \bar{y}'.((\mathbb{G} = \mathbb{G'}) \land \mathbb{C}')) \land \forall(\mathbb{C}' \rightarrow \exists \bar{y}.((\mathbb{G} = \mathbb{G'}) \land \mathbb{C}))
\end{equation*}
where $\forall F$ denotes the universal closure of formula $F$.
\end{definition}

The operational semantics is now defined by the following transition scheme over equivalence classes of CHR states i.e. $[\rho] := \{ \rho' ~|~ \rho' \equiv \rho \}$
\begin{definition}[operational semantics of CHR \cite{raiser_phdthesis10,raiser_betz_fru_equivalence_revisited_chr09}]
\label{def:chr_operational_semantics}
For a CHR program the state transition system over CHR states and the rule transition relation $\mapsto$ is defined as the following transition scheme:
\begin{equation*}
\frac
 {
   r \enspace @ \enspace H \Leftrightarrow G ~|~ B_c , B_b
 }
 {
 [\langle H \uplus \mathbb{G} ; G \land \mathbb{C} ; \mathbb{V} \rangle ] \mapsto^r [ \langle B_c \uplus \mathbb{G} ; G \land B_b \land \mathbb{C} ; \mathbb{V} \rangle ]
 }
\end{equation*}
Thereby, $r$ is a variant of a rule in the program such that its local variables are disjoint from the variables occurring in the representative of the pre-transition state. We may just write $\mapsto$ instead of $\mapsto^r$ if the rule $r$ is clear from the context.
\end{definition}

\subsection{Translation of ACT-R to CHR}

We briefly summarize the translation of ACT-R models to CHR first presented in \cite{gall_ruleml2016,gall_tocl_2017}. Since the translation is proven to be sound and complete \cite{gall_tocl_2017}, we explain the operational semantics of ACT-R with the help of the translation.

\begin{definition}[translation of abstract states]
\label{def:translation_states}
An abstract ACT-R state $\sigma := \langle \Delta; \gamma ; \addinfo \rangle$ can be translated to the following CHR state:
\begin{align*}
\langle & \{ \constr{delta}(\{\fun{chunk}(\id_\Delta(c),t,\llbracket \val \rrbracket) ~|~ c \in \Delta \land c = (t,\val)\}) \} \\
        & \uplus \{ \constr{gamma}(b,\id_\Delta(c),d) ~|~ b \in \buffers \land \gamma(b) = (c,d) \land c = (t,\val) \} ; \addinfo ; \emptyset \rangle 
\end{align*}
Thereby, $\llbracket \val \rrbracket$ denotes the explicit relational notation of the function $\val$ as a set of tuples. We denote the translation of an ACT-R state $\sigma$ by $\fun{chr}(\sigma)$.
\end{definition}

The chunk store is represented by a $\constr{delta}$ constraint that contains a set of $\fun{chunk}/3$ terms representing the chunks with their identifiers, types and slot-value pairs. 

For every buffer of the given architecture, there is a constraint $\constr{gamma}$ with buffer name, chunk identifier and delay. Since $\gamma$ is a total function, every buffer has exactly one $\constr{gamma}$ constraint. Additional information is represented directly as built-in constraints.

\begin{definition}[translation of rules]
\label{def:translation_rules}
Let $\fun{cogstate}(\buffers) := \{ (b,C_b) ~|~ b \in \buffers \}$ be the relation that connects each buffer with a variable $C_b$. An ACT-R rule in set-normal form $r := \mathcal{L} \Rightarrow \mathcal{R}$ can be translated to a CHR rule of the form:
\begin{align*}
r ~@~  &  \constr{delta}(D) \uplus \{ \constr{gamma}(b,C_b,E_b) ~|~ b \in \buffers \} \\
       \Leftrightarrow \\
       &\bigwedge_{\test{b}{t}{P} \in \mathcal{L}} ( \fun{chunk}(C_b,t,P) ~\constr{in}~ D \land E_b {=} 0 ) ~|~ \\
        & \{ \constr{delta}(D^*)\}  \uplus \{ \constr{gamma}(b,C_b^{**},\fun{resdelay}(b)) ~|~ b \in \buffers \land a(b,t,P) \in R \} \\
        \uplus~    &  \{ \constr{gamma}(b,C_b,E_b) ~|~ b \in \buffers \land a(b,t,P) \notin R \} ,\\
                  & \bigwedge_{\alpha = a(b,t,P) \in \mathcal{R}}  \constr{action}(\alpha, D, \fun{cogstate}(\buffers), D_b^*, C_b^*, E_b^*) \\
       \land~      & \constr{merge}([D_b^* : a(b,t,P) \in R],D') \} \land \constr{merge}([D,D'],D^*) ~\uplus\\
       \land~      & \bigwedge_{a(b,t,P) \in R} \constr{map}(D,D',C_b^*,C_b^{**}).
\end{align*}
Note that ACT-R constants and variables from $\consts$ and $\vars$ are implicitly translated to corresponding CHR variables. 

We denote the translation of a rule $r$ by $\fun{chr}(r)$ and the translation of an ACT-R model $\Sigma$ that is a set of ACT-R rules by $\fun{chr}(\rules)$. Thereby, $\fun{chr}(\rules) := \{ \fun{chr}(r) ~|~ r \in \rules \}.$
\end{definition}
The rule removes the $\constr{delta}$ and all $\constr{gamma}$ constraints from the store. It binds the translation of the chunk store $\Delta$ to the variable $D$. For all buffers $b$, each variable $C_b$ is bound to the chunk identifier of the chunk in $b$, i.e. $C_b = \id_\Delta(\gamma(b))$. The guard now performs all buffer tests $\test{b}{t}{P}$ from the ACT-R rule by testing if a chunk term $\fun{chunk}(C_b,t,P)$ is in the translated chunk store $D$ that has type $t$ and matches all slot-value pairs in $P$. The ACT-R variables in $P$ are bound to the values in the state.

In the body, the built-in constraints $\constr{action}$ perform the actions of the ACT-R rule as defined by the architecture. 
An $\constr{action}$ constraint gets the action term $\alpha$ of the rule (with all variables bound through the matching), the original chunk store and a representation of the cognitive state. Since the $C_b$ have been bound in the matching, it consists of tuples that connect each buffer $b$ with the chunk identifier it holds.

The $\constr{action}$ built-in constraint returns a chunk store $D_b^*$, a chunk identifier $C_b^*$ that represents the resulting chunk from the request and a result delay $E_b^*$. The $\constr{merge}$ constraints merge the chunk stores of all actions with the original store $D$ to the store $D^*$. The result of merging two chunk stores can vary from implementation to implementation, but has to obey some rules defined in \cite{gall_tocl_2017}. One can think of it as a multi-set union. As chunk identifiers might change in the merging process, the built-in $\constr{map}$ maps the chunk identifier of the results to the corresponding identifiers in the merged store.

Then, a new $\Delta$ constraint with the resulting chunk store $D^*$ is added as well as the $\constr{gamma}$ constraints. If the buffer $b$ has been part of an action, then it is altered such that it holds the resulting chunk identifier $C_b^{**}$ after the merge and the resulting delay $E_b^*$. If it was not part of an action, its parameters $C_b$ (the chunk identifier) and $E_b$ (the delay) remain unchanged. This is possible, since the chunk merging guarantees that chunks in the original chunk store $D$ the constraint $\constr{gamma}$ is referring to, are also part of the merged chunk store $D^*$.

\begin{example}[counting]
\label{ex:counting}
We now give an example ACT-R rule to explain its operational semantics.
A classical example in ACT-R is counting by recalling count order facts. The model uses chunks of type \emph{order} as illustrated in figure~\ref{fig:chunk_count_order}. An \emph{order} chunk has a \emph{first} and a \emph{second} slot that link two chunks representing natural numbers in the right order. Additionally, we define a second chunk type \emph{g} that memorizes the current number in the counting process. The main rule is defined as:
\begin{align*}
 & \test{goal}{g}{\{(current,X)\}}, \test{retrieval}{order}{\{(first,X), (second,Y)\}} \\
\Rightarrow \enspace & \actionmod{goal}{g}{\{(current,Y)\}}, \actionreq{retrieval}{order}{\{(first,Y)\}}
\end{align*}
The left-hand side tests if there is a chunk of type \emph{g} in the \emph{goal} buffer. The value of its \emph{current} slot is bound to variable $X$ by the matching. The second buffer test checks the \emph{retrieval} buffer for a chunk of type \emph{order} that has $X$ in its \emph{first} slot. The value of the \emph{second} slot is bound to variable $Y$.

The right-hand side modifies the chunk in the \emph{goal} buffer such that $Y$ is written to the \emph{current} slot. The second action requests the \emph{retrieval} buffer for an \emph{order} chunk that has $Y$ in its \emph{first} slot. As soon as the requested chunk is available, the program can apply the rule again. The head and guard of the CHR translation $H \Leftrightarrow G ~|~ B$ of the rule is 
\begin{align*}
 H & := \{ \constr{delta}(D), \constr{gamma}(g,C_g,0), \constr{gamma}(\fun{retrieval},C_r,0) \},\\
 G & := \fun{chunk}(C_g, g, \{ (\fun{current}, X) \}) ~\constr{in}~ D ~\land \\
   &    \fun{chunk}(C_r, \fun{order}, \{ (\fun{first}, X), (\fun{second},Y) \}) ~\constr{in}~ D.
\end{align*}
\end{example}

%

\section{Confluence Criterion for ACT-R}
\label{sec:confluence_criterion}

This section is the main contribution of the paper. We gradually develop a decidable criterion for confluence of ACT-R using the CHR embedding.

Therefor, a brief introduction to invariant-based confluence analysis for CHR is given that extends the standard confluence criterion to handle invariants that must hold for the regarded states. We then define the ACT-R invariant $\actrinv$ that is satisfied if a CHR state has been derived from an ACT-R state. Then a decidable criterion for the invariant is presented and it is shown that the invariant is maintained in translated ACT-R models. It is shown how invariant-based confluence analysis for CHR can be applied to decide ACT-R confluence.

\subsection{Invariant-based Confluence}
\label{sec:confluence_criterion:invariant_confluence}

We now give a brief introduction to invariant-based confluence analysis for CHR. The first results stem from \cite{duck_observable_2007}. We summarize the main theorem of the improved version that can be found in \cite[section~14]{raiser_phdthesis10}.

The main idea of the confluence criterion is that heads and guards of the rules are overlapped to an overlap state. Then both overlapping rules are applied to this state forming a critical pair that is checked for joinability for all possible overlap states. An overlap is defined as follows:
\begin{definition}[overlap and critical pairs \cite{fru_chr_book_2009,raiser_phdthesis10}]
For any two (not necessarily different) rules of a CHR program with renamed apart variables of the form $r ~@~ H \Leftrightarrow G ~|~ B_c, B_b$ and $r' ~@~ H' \Leftrightarrow G' ~|~ B_c', B_b'$, let $O \subseteq H$, $O' \subseteq H'$ such that for $B := (O = O') \land G \land G'$ it holds that $\mathcal{CT} \models \exists.B$ and $O \neq \emptyset$, then the state 
\begin{equation*}
 \sigma = \langle R \uplus R' \uplus O ; B ; \mathbb{V} \rangle
\end{equation*}
is called an \emph{overlap} of $r$ and $r'$ where $R := H \setminus O$, $R' := H' \setminus O'$ and $\mathbb{V}$ is the set of all variables occurring in heads and guards of both rules. The pair of states
$\sigma_1 := \langle R' \uplus B_c ; B \land B_b ; \mathbb{V} \rangle$ and $\sigma_2 := \langle R \uplus B_c' ; B \land B_b' ; \mathbb{V} \rangle$
is a \emph{critical pair} of the overlap $\sigma$.

\end{definition}

CHR has the  monotonicity property. It states that all rules that are applicable in a state, are also applicable in any larger state. This idea can be exploited to reason from joinable overlap states about local confluence and therefore confluence of a CHR program. The problem with invariant-based confluence is that the idea of using monotonicity to reason about larger states does not work for states where the invariant does not hold. An overlap that does not satisfy the invariant makes all information about this state irrelevant \cite[p. 79]{raiser_phdthesis10}. The idea of the invariant-based confluence theorem for CHR is to extend all states where the invariant does not hold such that the invariant is repaired and include the extended states in the confluence test. Since in general there are infinitely many extensions that maintain the invariant, only minimal extensions according to a partial order defined in \cite{raiser_phdthesis10} have to be considered. Then, monotonicity can be applied again. 
\begin{theorem}[invariant-based confluence for CHR \cite{raiser_phdthesis10}]
\label{thm:inv_confluence} 
 For an invariant $\inv$, let $\Sigma^\inv([\rho]) := \{ [\rho'] ~|~ \text{ $[\rho']$ is an extension of $[\rho]$ such that $\inv$ holds } \}$ be the set of \emph{satisfying extensions} of $[\rho]$. The set $\mathcal{M}^\inv([\rho])$ is the set of \emph{minimal elements} of $\Sigma^\inv([\rho])$ w.r.t. the partial order on states defined in \cite{raiser_phdthesis10}.

Let $\mathcal{P}$ be a CHR program and $\mathcal{M}^\inv([\rho])$ be well-defined for all overlaps $\rho$. $\mathcal{P}$ is locally confluent with respect to $\inv$ if and only if for all overlaps $\rho$ with critical pairs $(\rho_1, \rho_2)$ and all $[\rho_\mathrm{m}] \in \mathcal{M}^\inv([\rho])$ holds that $[\rho_1]$ extended by $[\rho_\mathrm{m}]$ and $[\rho_2]$ extended by $[\rho_\mathrm{m}]$ are joinable. We then say that $\mathcal{P}$ is \emph{$\inv$-(locally) confluent}.
\end{theorem}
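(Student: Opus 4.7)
The plan is to adapt the classical critical-pair lemma for CHR to the invariant-based setting, combining it with the monotonicity property of CHR and the concept of minimal satisfying extensions. I would prove the two directions of the biconditional in turn.

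The ``only if'' direction is the easier one. Assuming $\mathcal{P}$ is $\inv$-locally confluent, fix an overlap $\rho$ with critical pair $(\rho_1,\rho_2)$ and a minimal satisfying extension $[\rho_{\mathrm{m}}] \in \mathcal{M}^\inv([\rho])$. Extending $[\rho]$ by $[\rho_{\mathrm{m}}]$ yields a state that satisfies $\inv$ by construction of $\Sigma^\inv$ and in which both overlapping rules are still applicable by CHR monotonicity. Performing each of the two transitions produces exactly $[\rho_1]$ and $[\rho_2]$ extended by $[\rho_{\mathrm{m}}]$. The $\inv$-local confluence hypothesis then yields joinability of these two extended critical-pair states, as required.

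The ``if'' direction is where the substance of the argument lies. Take a state $[\sigma]$ satisfying $\inv$ together with two one-step transitions $[\sigma] \mapsto [\sigma_1]$ and $[\sigma] \mapsto [\sigma_2]$ via rules $r$ and $r'$. I would first isolate the parts of the two head matchings that share constraints; this identifies an overlap state $[\rho]$ of $r$ and $r'$ together with a ``context'' state $[\rho_c]$ such that $[\sigma]$ is equivalent to $[\rho]$ extended by $[\rho_c]$. Since $[\sigma]$ satisfies $\inv$, the context $[\rho_c]$ exhibits $[\rho]$ as having at least one satisfying extension, so the well-definedness of $\mathcal{M}^\inv([\rho])$ yields a minimal $[\rho_{\mathrm{m}}]$ lying below $[\rho_c]$ in the partial order of \cite{raiser_phdthesis10}. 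By hypothesis, the extended critical pair formed from $[\rho_1]$ and $[\rho_2]$ together with $[\rho_{\mathrm{m}}]$ is joinable to some $[\rho']$. A second application of CHR monotonicity then augments these joining derivations with the remainder of the context that separates $[\rho_c]$ from $[\rho_{\mathrm{m}}]$, giving a common successor of $[\sigma_1]$ and $[\sigma_2]$. This is local $\inv$-confluence; full confluence then follows from the equivalence of local confluence and confluence recalled earlier in the preliminaries.

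The principal obstacle is to make the overlap-plus-context decomposition of $[\sigma]$ precise at the level of CHR state equivalence classes: the shared head part $O$ may be empty or only a proper sub-multi-set of either head, variable sharing between guards and the built-in store has to be tracked carefully through Definition~\ref{def:chr_state_equiv}, and the factoring of $[\rho_c]$ through a minimal $[\rho_{\mathrm{m}}]$ must be compatible with the partial order used in \cite{raiser_phdthesis10}. A related subtlety is that CHR monotonicity must be applied along trajectories on which $\inv$ is preserved, so that the augmented joining derivations do not leave the scope of the invariant. These technicalities are precisely what the well-definedness assumption on $\mathcal{M}^\inv$ and the existing CHR critical-pair machinery are designed to resolve, so the novelty compared with the classical proof is largely in keeping the invariant and the minimal extensions synchronized throughout the argument.
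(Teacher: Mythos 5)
First, a point of reference: the paper does not prove this theorem at all --- it is imported from \cite{raiser_phdthesis10} (Theorem~6, p.~83 there) and stated without proof, so there is no in-paper argument to compare against. Judged against the proof in that cited source, your sketch reconstructs the standard argument and is correct in outline: the ``only if'' direction by applying monotonicity to the minimal satisfying extension, and the ``if'' direction by decomposing a peak $[\sigma]$ into an overlap $[\rho]$ plus a context $[\rho_c]$, factoring $[\rho_c]$ through a minimal element of $\Sigma^\inv([\rho])$, invoking the hypothesis on the extended critical pair, and lifting the joining derivations back to $[\sigma_1]$ and $[\sigma_2]$ by monotonicity.

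Two things you relegate to ``technicalities'' need to be explicit for the argument to close. First, a peak in which the two rule applications consume disjoint multi-sets of head constraints is not covered by any overlap, since the definition requires $O \neq \emptyset$; this case lies entirely outside the critical-pair machinery and must be discharged separately by showing the two transitions commute directly (each rule instance remains applicable after the other fires). You mention that $O$ ``may be empty'' as a difficulty of the decomposition, but it is really a separate case of the proof, exactly as in the classical critical-pair lemma for CHR. Second, in the ``if'' direction you need not merely that some minimal element exists, but that \emph{every} element of $\Sigma^\inv([\rho])$ --- in particular the one induced by the actual context $[\rho_c]$ --- dominates a minimal one in the partial order; this is precisely what the well-definedness hypothesis on $\mathcal{M}^\inv([\rho])$ supplies, and the factoring step fails without it. With those two points spelled out, your outline matches the proof in the cited source.
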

There are two problems with this result making it possibly undecidable: The invariant could be undecidable and the set of minimal elements can be infinitely large. We will show that in the case of the ACT-R invariant that we use for our confluence test, the set of satisfying extensions is empty and the invariant is decidable. Hence, it is not necessary for the understanding of this paper how the partial order on states and therefore the set of minimal elements is defined formally, since the set of satisfying extensions is already empty for the ACT-R invariant. The ACT-R invariant is defined in the following section.

\subsection{ACT-R Invariant}
\label{sec:confluence_criterion:actr_invariant}

To reason about confluence of ACT-R models in CHR, we need an invariant that restricts the CHR state space to states that stem from a valid ACT-R state. In the following example, we show how overlapping translated ACT-R rules can lead to overlap states that do not describe a valid ACT-R state.
\begin{example}
Let $\{ \constr{delta}(D), \constr{gamma}(B,C,0) \} \Leftrightarrow \fun{chunk}(C,T,P) ~\constr{in}~ D ~|~ \dots$ be a CHR rule that has been obtained from an ACT-R rule. By overlapping the rule with itself, we could get 
\begin{align*}
 \sigma := & \langle \constr{delta}(D), \constr{gamma}(B,C,0), \constr{gamma}(B,C',0) ;\\
 &\fun{chunk}(C,T,P) ~\constr{in}~ D \land \fun{chunk}(C',T',P') ~\constr{in}~ D ; \mathbb{V} \rangle.
\end{align*}
However, this state does not stem from a valid ACT-R state, since $\gamma$ is a function with only one value for each buffer and therefore the translation of an ACT-R state can never contain two $\constr{gamma}$ constraints for the same buffer $B$.
\end{example}

In the following, we define the ACT-R invariant $\actrinv$ on CHR states that limits the state space to states that stem from valid ACT-R states. We show that the invariant is decidable by breaking it down to five fine grained invariants. We also show that it actually defines an invariant for translated ACT-R models.

\begin{definition}[ACT-R invariant]
Let $[\rho]$ be a CHR state. The \emph{ACT-R invariant} $\actrinv$ holds if and only if there is an ACT-R state $\sigma$ such that $\rho \equiv \fun{chr}(\sigma)$.
\end{definition}
Basically, this means that $\actrinv([\rho])$ holds if $[\rho]$ is the valid translation of an ACT-R state. However, by this definition it is hard to decide if a CHR state satisfies the invariant.

We now show some decidable sub-invariants on CHR states and prove that their conjunction is equivalent to $\actrinv$. For this purpose, we define an auxiliary function $\fun{ids}$ that returns the set of chunk identifiers for a set of $\fun{chunk}/3$ terms.
\begin{definition}[chunk identifiers]
Let $d$ be a set. Then 
\begin{equation*}
\fun{ids}(d) := \{ c ~|~ \fun{chunk}(c,t,p) \in d \}
\end{equation*}
is the \emph{set of chunk identifiers} of the set $d$.
\end{definition}

The sub-invariants mainly consist of \emph{uniqueness} invariants, i.e. they require that there is only one constraint of a certain kind for a class of arguments, and \emph{functional dependency} invariants, i.e. that certain sets that represent relations appearing in constraints are functions. Eventually, the constraints that can be be used in a state are restricted.

\begin{theorem}[ACT-R invariants]
\label{def:invariants}
Let $\rho \equiv \langle \mathbb{G} ; \mathbb{C} ; \mathbb{V} \rangle$ be a CHR state. We define the following sub-invariants:
\begin{enumerate}
  \item unique chunk store
 \label{def:invariants:unique_chunk_store}
 
 $\actrinv_{\ref{def:invariants:unique_chunk_store}}([\rho]) \leftrightarrow$ 
  There is exactly one constraint $\constr{delta}(d) \in \mathbb{G}$ for some ground set $d$. For all elements $e \in d$, it holds that there exist $c \in \consts, t \in \types, p \in \consts \times \consts, s \in \tau(t), v \in \consts$ such that $e = \fun{chunk}(c,t,p)$ and $p = \{ (s,v) ~|~ s \in \tau(t) \land v \in \fun{ids}(d) \}$.

 \item functional dependency of cognitive state 
  \label{def:invariants:func_cogstate}
 
$\actrinv_{\ref{def:invariants:func_cogstate}}([\rho]) \leftrightarrow$ 
 For all buffers $b \in \buffers$ there is exactly one $\constr{gamma}(b,c,e) \in \mathbb{G}$ where $c \in \fun{ids}(d)$ for some $\constr{delta}(d) \in \mathbb{G}$ and $e \in \mathbb{R}_0^+$.
 
 \item unique chunk identifiers
  \label{def:invariants:unique_chunk_ids}
 
 $\actrinv_{\ref{def:invariants:unique_chunk_ids}}([\rho]) \leftrightarrow$ 
 For all chunk identifiers $c \in \consts$ and constraints $\constr{delta}(d) \in \mathbb{G}$, if $\fun{chunk}(c,t,p) \in d$, then there is no other term $\fun{chunk}(c,t',p') \in d$.

 \item functional dependency of slot-value pairs
 \label{def:invariants:func_svp}
 
 $\actrinv_{\ref{def:invariants:func_svp}}([\rho]) \leftrightarrow$ 
 For all constraints $\constr{delta}(d) \in \mathbb{G}$, terms $\fun{chunk}(c,t,p)$ in set $d$ and $(s,v)$ in set $p$, there is no other term $(s,v')$ in $p$.
 
 \item allowed constraints
 \label{def:invariants:allowed_constr}
 
 $\actrinv_{\ref{def:invariants:allowed_constr}}([\rho]) \leftrightarrow$ 
 In $\mathbb{G}$ there are only $\constr{delta}/1$ and $\constr{gamma}/3$ constraints, only syntactic equality $=/2$ and the allowed constraints defined by the ACT-R architectures appear in $\mathbb{C}$ and $[\rho]$ is ground.
\end{enumerate}

For all CHR states $[\rho]$ it holds that $\actrinv([\rho]) \leftrightarrow \bigwedge_{i=1}^{5} \actrinv_i([\rho]).$
\end{theorem}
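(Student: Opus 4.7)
The plan is to prove the biconditional by showing both directions and exploiting the constructive nature of the translation $\fun{chr}$ from Definition~\ref{def:translation_states}. The forward direction is a straightforward verification on the image of the translation, while the backward direction requires reconstructing an ACT-R state witness from a CHR state that satisfies all five sub-invariants.

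For the forward direction ($\Rightarrow$), assume $\actrinv([\rho])$ holds, so there exists $\sigma = \langle \Delta; \gamma; \addinfo \rangle$ with $\rho \equiv \fun{chr}(\sigma)$. I would then read off each sub-invariant directly from Definition~\ref{def:translation_states}: $\actrinv_1$ holds because the translation produces exactly one $\constr{delta}$ constraint whose single argument is the set of $\fun{chunk}(c,t,\llbracket\val\rrbracket)$ terms, and since $\val : \tau(t) \to \Delta$ has domain $\tau(t)$ and codomain among the chunks (whose identifiers form $\fun{ids}(d)$ via $\id_\Delta$), the shape constraint on $p$ is satisfied; $\actrinv_2$ holds because $\gamma$ is a total function $\buffers \to \Delta \times \mathbb{R}_0^+$, yielding exactly one $\constr{gamma}(b,c,e)$ per buffer with $c \in \fun{ids}(d)$; $\actrinv_3$ holds because $\id_\Delta$ is bijective; $\actrinv_4$ holds because $\llbracket\val\rrbracket$ is the relational notation of a function, so no key repeats; and $\actrinv_5$ holds because the translation introduces only $\constr{delta}/1$ and $\constr{gamma}/3$ as CHR constraints, places $\addinfo$ (ground, atomic, drawn from the allowed architectural predicates) into the built-in store, and produces a ground state.

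For the backward direction ($\Leftarrow$), assume all five sub-invariants hold for $[\rho] \equiv [\langle \mathbb{G}; \mathbb{C}; \mathbb{V}\rangle]$. I would construct $\sigma = \langle \Delta; \gamma; \addinfo \rangle$ explicitly. By $\actrinv_1$ there is a unique $\constr{delta}(d) \in \mathbb{G}$; for each $\fun{chunk}(c,t,p) \in d$, invariants $\actrinv_3$ and $\actrinv_4$ guarantee that $c$ is unique and that $p$ is the graph of a well-defined function $\val_c : \tau(t) \to \fun{ids}(d)$. I define $\Delta := \{(t,\val_c) \mid \fun{chunk}(c,t,p) \in d\}$ together with $\id_\Delta$ sending each chunk back to its identifier $c$ (injective by $\actrinv_3$, hence bijective onto $\fun{ids}(d)$). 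The codomain mismatch is resolved by post-composing $\val_c$ with $\id_\Delta^{-1}$, which is legal because $\actrinv_1$ forces $v \in \fun{ids}(d)$ for every slot value. By $\actrinv_2$, for each buffer $b$ there is exactly one $\constr{gamma}(b,c,e)$ with $c \in \fun{ids}(d)$, so $\gamma(b) := (\id_\Delta^{-1}(c), e)$ is a well-defined total function. Finally, $\actrinv_5$ lets me take $\addinfo := \mathbb{C}$ restricted to the architecture's allowed predicates, with the residual built-in content being only syntactic equalities on ground terms, which are absorbed by CHR state equivalence (Definition~\ref{def:chr_state_equiv}).

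The main obstacle is the last step: verifying that $\fun{chr}(\sigma) \equiv \rho$ as equivalence classes of CHR states, rather than as syntactic objects. I expect to handle this by noting that groundness (from $\actrinv_5$) trivialises the existential quantifiers in Definition~\ref{def:chr_state_equiv}, so equivalence reduces to equality of the goal multisets modulo the representation of $\val$ as $\llbracket\val\rrbracket$ and to logical equivalence of the built-in stores under $\mathcal{CT}$. The uniqueness clauses in $\actrinv_1$, $\actrinv_2$, $\actrinv_3$, and $\actrinv_4$ together ensure that the translation of the reconstructed $\sigma$ produces precisely the multiset $\mathbb{G}$, closing the argument.
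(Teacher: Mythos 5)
Your proposal is correct and follows essentially the same route as the paper: the forward direction by direct inspection of the translation in Definition~\ref{def:translation_states}, and the backward direction by explicitly reconstructing $\Delta$, $\id_\Delta$, $\gamma$, and $\addinfo$ from the respective sub-invariants and then checking $\rho \equiv \fun{chr}(\sigma)$. Your explicit post-composition of $\val_c$ with $\id_\Delta^{-1}$ to fix the codomain is a small point of extra care that the paper's own construction glosses over, but it does not change the argument.
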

\begin{proof}
\begin{description}
 \item[if direction] 
 
 If $\actrinv([\rho])$, then $[\rho]$ is the product of the translation of an ACT-R state. It follows directly from definition~\ref{def:translation_states} that in that case, $\actrinv_1([\rho])$, $\actrinv_2([\rho])$, $\actrinv_3([\rho])$, $\actrinv_4([\rho])$ and $\actrinv_5([\rho])$ hold.
 
 \item[only-if direction] 
 
 We have to show that for all CHR states $[\rho]$ where the invariants $\actrinv_1([\rho])$, $\actrinv_2([\rho])$, $\actrinv_3([\rho])$, $\actrinv_4([\rho])$ and $\actrinv_5([\rho])$ hold, there is an ACT-R state $\sigma$ such that $\rho \equiv \fun{chr}(\sigma)$. Let $[\rho] := [\langle \mathbb{G} ; \mathbb{C} ; \mathbb{V} \rangle]$.
 
 We construct the ACT-R state $\sigma := \langle \Delta ; \gamma ; \addinfo \rangle$. Since $\actrinv_{\ref{def:invariants:unique_chunk_store}}([\rho])$, there is exactly one $\constr{delta}(d)$ constraint for a set $d$ and all elements in $d$ are of the form $\fun{chunk}(c,t,p)$ where $c \in \consts, t \in \types$ and $p$ is a set of elements $(s,v)$ with $s \in \tau(t)$ and $v \in \fun{ids}(d)$. The set $p$ is total with respect to $s$ and the $v$ are chunk identifiers that appear in $d$. Due to $\actrinv_{\ref{def:invariants:func_svp}}$, there is exactly one $(s,v) \in p$ for each $s \in \tau(t)$, hence $p$ is the relational representation of a value function.The invariant $\actrinv_{\ref{def:invariants:unique_chunk_ids}}$ guarantees that the chunk identifiers are unique. 
 
 We define $\Delta := \{ (t,p) ~|~ \fun{chunk}(c,t,p) \in d \}$ with the identifier function $\id_\Delta := \{ ((t,p),c) ~|~ \fun{chunk}(c,t,p) \}$.
 
 Due to invariant $\actrinv_{\ref{def:invariants:func_cogstate}}$, the cognitive state can then be defined for all $b \in \buffers$ such that $\gamma(b) := (\id_\Delta^{-1}(c),e)$ for each $\constr{gamma}(b,c,e) \in \mathbb{G}$.

 Since $\actrinv_{\ref{def:invariants:allowed_constr}}([\rho])$, $[\rho]$ is ground. Hence, we can find another representative of the state with $\rho \equiv \langle \mathbb{G}' ; \mathbb{C}' ; \emptyset \rangle$, that applies all equality constraints $X {=} t$ in $\mathbb{C}$ such that only constants appear in $\mathbb{G}'$ and $\mathbb{C}'$ and $\mathbb{C}'$ only consists of allowed predicates defined by the ACT-R architecture. Therefore, we can set $\addinfo := \mathbb{C}'$.
 
 From the construction of $\sigma$ it is clear that $\rho \equiv \fun{chr}(\sigma)$.
\end{description}

\end{proof}

The invariants $\actrinv_1, \dots, \actrinv_5$ are obviously decidable. Since they are equivalent to the ACT-R invariant $\actrinv$, theorem~\ref{def:invariants} gives us a decidable criterion for the ACT-R invariant $\actrinv$.

In the next step, we show that the ACT-R invariant $\actrinv$ is maintained by transitions that come from a translated ACT-R program, i.e. that it really is an invariant.
\begin{lemma}
\label{lemma:invariant_maintained}
Let $\mapsto$ be the state transition relation derived from the translation of an ACT-R model and $[\rho]$ a CHR state with $\actrinv([\rho])$. If $[\rho] \mapsto [\rho']$, then $\actrinv([\rho'])$.

\end{lemma}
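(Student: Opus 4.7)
The plan is to exploit the characterization of $\actrinv$ via the five decidable sub-invariants from Theorem~\ref{def:invariants} and the soundness/completeness of the ACT-R-to-CHR translation from \cite{gall_tocl_2017}. Since $\actrinv([\rho])$ holds, by Theorem~\ref{def:invariants} there is an ACT-R state $\sigma = \langle \Delta ; \gamma ; \addinfo \rangle$ with $\rho \equiv \fun{chr}(\sigma)$. The transition $[\rho] \mapsto [\rho']$ uses some rule in $\fun{chr}(\rules)$, which by Definition~\ref{def:translation_rules} is the translation $\fun{chr}(r)$ of an ACT-R rule $r = \mathcal{L} \Rightarrow \mathcal{R} \in \rules$. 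The primary obligation is to exhibit an ACT-R state $\sigma'$ such that $\rho' \equiv \fun{chr}(\sigma')$; then $\actrinv([\rho'])$ holds by definition.

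To produce $\sigma'$, I would appeal to the soundness result of \cite{gall_tocl_2017}: a CHR-step on the translation of an ACT-R state by a translated rule corresponds to a valid ACT-R transition $\sigma \utrans_r \sigma'$. The head of $\fun{chr}(r)$ matches precisely one $\constr{delta}$ constraint and the $\constr{gamma}$ constraints of all buffers (guaranteed by $\actrinv_1$ and $\actrinv_2$), and the guard checks the buffer tests in $\mathcal{L}$ against the chunk store in $D$. Thus the firing of $\fun{chr}(r)$ is in one-to-one correspondence with the applicability of $r$ to $\sigma$. Completeness then ensures that the post-state $\rho'$ is equivalent to $\fun{chr}(\sigma')$.

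As a backup, in case the reader prefers a direct verification rather than invoking the translation's correctness, one can show preservation of each sub-invariant $\actrinv_1, \ldots, \actrinv_5$ by inspecting the body of $\fun{chr}(r)$: the body produces exactly one $\constr{delta}(D^*)$ constraint (so $\actrinv_1$ is preserved, using the specification of $\constr{merge}$ which yields a chunk store whose identifiers and slot-value structure satisfy the required shape), it produces exactly one $\constr{gamma}(b,\cdot,\cdot)$ per buffer $b \in \buffers$ by taking a disjoint union over buffers with and without actions (preserving $\actrinv_2$), and the semantics of $\constr{merge}$ and $\constr{map}$ from \cite{gall_tocl_2017} ensures chunk identifiers in $D^*$ are unique ($\actrinv_3$) and that each chunk's slot-value mapping remains functional ($\actrinv_4$). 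Invariant $\actrinv_5$ is preserved because the body introduces only $\constr{delta}/1$, $\constr{gamma}/3$ constraints and ACT-R-architecture-defined built-ins, and the built-ins eliminate the auxiliary variables $D_b^*, C_b^*, E_b^*, D'$ by grounding them.

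The main obstacle is the $\constr{merge}$ and $\constr{map}$ machinery: one must rely on the merging specification of \cite{gall_tocl_2017} to conclude that the resulting chunk store still satisfies $\actrinv_1$, $\actrinv_3$ and $\actrinv_4$, and that the chunk identifiers $C_b^{**}$ referenced by the new $\constr{gamma}$ constraints lie in $\fun{ids}(D^*)$. Since all of this is packaged into the soundness statement of the translation, the cleanest proof is to cite that result and conclude $\rho' \equiv \fun{chr}(\sigma')$ directly, yielding $\actrinv([\rho'])$.
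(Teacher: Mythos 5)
Your primary argument---using the soundness and completeness of the ACT-R-to-CHR embedding to obtain an ACT-R state $\sigma'$ with $\rho' \equiv \fun{chr}(\sigma')$ and concluding $\actrinv([\rho'])$ by definition---is exactly the proof the paper gives. The additional sub-invariant-by-sub-invariant verification you sketch as a backup is not needed and is not in the paper, but your main route matches.
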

\begin{proof}
We are going to use soundness and completeness \cite{gall_tocl_2017} to prove this.

Let $[\rho]$ be a CHR state with $\actrinv([\rho])$. Since $\actrinv([\rho])$, there is an ACT-R state $\sigma$ with $\rho \equiv \fun{chr}(\sigma)$. Due to the sound and complete embedding of ACT-R in CHR, there is an ACT-R state $\sigma'$ with $\rho' \equiv \fun{chr}(\sigma')$. Hence, $\actrinv([\rho'])$ holds.
\end{proof}

\subsection{Invariant-Based Confluence Test}
\label{sec:confluence_criterion:invariant_confluence_actr}

We want to use theorem~\ref{thm:inv_confluence} \cite[p. 83, theorem~6]{raiser_phdthesis10} to prove confluence of all states $[\rho]$ that satisfy the ACT-R invariant, i.e. where $\actrinv([\rho])$. Therefore, we have to construct the set $\Sigma^\actrinv([\rho])$ for each state $[\rho]$ that does not satisfy $\actrinv$. It contains all states that can be merged to $[\rho]$ such that they satisfy $\actrinv$ (see theorem~\ref{thm:inv_confluence}). The minimal elements in this set have to be considered in the confluence test.

We will see that for all states $[\rho]$ that do not satisfy $\actrinv$, the set of minimal elements is empty. Intuitively, this means that there are no states that can extend $[\rho]$ such that it satisfies $\actrinv$.
\begin{lemma}[minimal elements for $\actrinv$]
\label{lemma:min_elements}
 Let $\actrinv$ be the ACT-R invariant as defined in definition~\ref{def:invariants}. For all states $[\rho]$ such that $\actrinv([\rho])$ does not hold, $\Sigma^\actrinv([\rho]) = \emptyset$ and therefore $\mathcal{M}^\actrinv([\rho]) = \emptyset$.
\end{lemma}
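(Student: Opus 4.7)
I would prove the stronger claim $\Sigma^\actrinv([\rho]) = \emptyset$ directly, from which $\mathcal{M}^\actrinv([\rho]) = \emptyset$ follows trivially. The strategy is to exploit the decomposition $\actrinv \leftrightarrow \bigwedge_{i=1}^{5} \actrinv_i$ proved in Theorem~\ref{def:invariants}: once some $\actrinv_i$ fails on $[\rho]$, I would show that the very same $\actrinv_i$ continues to fail on every candidate extension $[\rho'] \equiv [\rho] \sqcup [\tau]$, and hence no such $[\rho']$ lies in $\Sigma^\actrinv([\rho])$.

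\paragraph{Key steps.}
The engine of the argument is the monotonicity of CHR state merging: the goal multi-set of $\rho'$ contains that of $\rho$, its built-in store entails that of $\rho$, and crucially the ground arguments of any $\constr{delta}$ or $\constr{gamma}$ already present in $\rho$ appear unchanged in $\rho'$. I would then do a brief case analysis on which $\actrinv_i$ is violated. Violations of $\actrinv_{\ref{def:invariants:unique_chunk_store}}$ by having two or more $\constr{delta}$ constraints, of $\actrinv_{\ref{def:invariants:func_cogstate}}$ by having two $\constr{gamma}$ constraints for the same buffer, of $\actrinv_{\ref{def:invariants:unique_chunk_ids}}$ by a repeated chunk identifier, or of $\actrinv_{\ref{def:invariants:func_svp}}$ by a repeated slot in one chunk, all survive in $\rho'$ because merging never deletes. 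Structural defects in an existing $\constr{delta}(d)$---malformed entries, slot values outside $\fun{ids}(d)$, or illegal slot names---persist too, because $d$ is frozen inside the constraint and extension cannot rewrite it. Violations of $\actrinv_{\ref{def:invariants:allowed_constr}}$ by a forbidden constraint symbol in $\mathbb{G}$ or a non-permitted predicate in $\mathbb{C}$ likewise cannot be purged by extension, and non-groundness failures cannot compensate for any of the preceding structural issues.

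\paragraph{Main obstacle.}
The delicate part, which will eat up most of the write-up, is the \emph{absence} failure modes: $\rho$ missing a $\constr{delta}$ constraint, or missing a $\constr{gamma}$ for some buffer $b$. A naive extension would simply add what is missing, so I would have to argue that doing so is always forced to create a new violation elsewhere. The plan is to trace the chunk-identifier linkage imposed by $\actrinv_{\ref{def:invariants:func_cogstate}}$: every $\constr{gamma}(b,c,e)$ in $\rho$ requires $c \in \fun{ids}(d)$ for some $\constr{delta}(d) \in \rho$. If $\rho$ has no $\constr{delta}$ but some $\constr{gamma}$, adding a $\constr{delta}(d')$ is the only fix, but whenever $\rho$ already contains a $\constr{delta}$ this creates a second one and breaks $\actrinv_{\ref{def:invariants:unique_chunk_store}}$ in $\rho'$; if $\rho$ contains no $\constr{delta}$ and a $\constr{gamma}$ referencing $c$, any added $\constr{delta}(d')$ must include a chunk with identifier $c$ of the matching type, and for every configuration in which some $\actrinv_i$ failed in $\rho$ this forced inclusion will reintroduce a clash. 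Dually, if $\rho$ misses a $\constr{gamma}$ for buffer $b$, the added $\constr{gamma}(b,c,e)$ must reuse $\fun{ids}(d)$ from the (at most one) $\constr{delta}$ already present, which leaves no degree of freedom to repair simultaneously any other $\actrinv_i$ that may have failed. Spelling this out carefully for each failure mode is the routine but lengthiest portion of the proof.
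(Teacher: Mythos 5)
Your treatment of the ``presence'' violations (duplicate $\constr{delta}$, duplicate $\constr{gamma}$ for a buffer, repeated chunk identifiers or slots, malformed entries frozen inside a $\constr{delta}(d)$, forbidden constraint symbols) is exactly the paper's argument: merging never removes constraints, so these defects persist in every extension. The genuine problem is the ``absence'' case, which you correctly single out as the delicate part --- the paper's own proof silently skips it --- but your proposed resolution does not work. If $[\rho]$ violates $\actrinv_{\ref{def:invariants:unique_chunk_store}}$ because it contains \emph{no} $\constr{delta}$ constraint (or violates $\actrinv_{\ref{def:invariants:func_cogstate}}$ because some buffer has no $\constr{gamma}$ at all), an extension can simply supply what is missing, and nothing forces a new clash: take $[\rho]$ to be the empty state, or a state containing a single well-formed $\constr{gamma}(b,c,0)$ and nothing else; merging with the translation of a suitable ACT-R state yields a state satisfying $\actrinv$, so $\Sigma^\actrinv([\rho]) \neq \emptyset$. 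Your claim that the forced inclusion of a chunk with identifier $c$ ``will reintroduce a clash'' is asserted rather than proved, and it is false in these instances (note also that your case split for this step is internally inconsistent: it assumes both that $\rho$ has no $\constr{delta}$ and that it already contains one). Read literally as a statement about \emph{all} states, the lemma cannot be rescued along these lines, because it fails in the absence case.

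The way out is to observe that the lemma is only ever invoked for overlap states of translated ACT-R rules. An overlap $\langle R \uplus R' \uplus O ; B ; \mathbb{V} \rangle$ contains the entire head $H$ of one of the two rules, and by Definition~\ref{def:translation_rules} that head already contains exactly one $\constr{delta}$ and one $\constr{gamma}$ for every buffer; hence the absence modes cannot occur there, and every violation of $\actrinv_{\ref{def:invariants:unique_chunk_store}}$ or $\actrinv_{\ref{def:invariants:func_cogstate}}$ is of the duplication or malformedness kind, which your monotonicity argument does handle. Either restrict the statement to overlaps of $\fun{chr}(\rules)$ or add this observation explicitly; without it, your ``main obstacle'' paragraph remains a gap rather than a proof.
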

\begin{proof}
Let $[\rho] := [\langle \mathbb{G} ; \mathbb{C} ; \mathbb{V} \rangle]$. We use theorem~\ref{def:invariants} that allows us to analyze the individual sub-invariants:
\begin{enumerate}
 \item If $\actrinv_{\ref{def:invariants:unique_chunk_store}}$ is violated, there are the following cases:
 \begin{itemize}
  \item There are two constraints $\constr{delta}(d), \constr{delta}(d') \in \mathbb{G}$. We cannot extend $[\rho]$ (i.e. add constraints) to satisfy $\actrinv_{\ref{def:invariants:unique_chunk_store}}$.
  \item There is only one unique $\constr{delta}(d) \in \mathbb{G}$, with elements that do not have the required form. Again, no constraints can be added to satisfy $\actrinv_{\ref{def:invariants:unique_chunk_store}}$.
 \end{itemize}
 \item If $\actrinv_{\ref{def:invariants:func_cogstate}}$ is violated, there are two constraints $\constr{gamma}(b,c,e), \constr{gamma}(b',c',e') \in \mathbb{G}$. We cannot satisfy $\actrinv_{\ref{def:invariants:func_cogstate}}$ for such a state.
 \item The proof is analogous for $\actrinv_{\ref{def:invariants:unique_chunk_ids}}$ and $\actrinv_{\ref{def:invariants:func_svp}}$.
 \item If $\actrinv_{\ref{def:invariants:allowed_constr}}$ is violated, there are other constraints then $\constr{delta}$ or $\constr{gamma}$ in $\mathbb{G}$ or other than the allowed constraints defined by the architecture in $\mathbb{C}$. This cannot be repaired by extending $\mathbb{G}$ or $\mathbb{C}$.
\end{enumerate}

\end{proof}

We can directly apply theorem~\ref{thm:inv_confluence}: For all overlaps $\rho$ where $\actrinv([\rho])$ holds, the set of minimal elements is $\mathcal{M}^\actrinv([\rho]) = \{ [\rho_\emptyset] \}$ \cite[p.80, lemma 13.13]{raiser_phdthesis10} where $\rho_\emptyset := \langle \emptyset ; \top ; \emptyset \rangle$ is the empty CHR state. Hence, for overlaps where $\actrinv$ holds, we only have to show joinability of the critical pairs that stem from the overlap itself. This coincides with the regular confluence test of CHR as defined in \cite{fru_chr_book_2009}.

For all overlaps $\rho$ where $\actrinv([\rho])$ does not hold, the set of minimal elements is $\mathcal{M}^\actrinv([\rho]) = \emptyset$ by lemma~\ref{lemma:min_elements}. Therefore, no critical pairs have to be tested. We summarize this in the following theorem.
\begin{theorem}[$\actrinv$-local confluence]
\label{thm:actr_local_confluence}
A CHR program is $\actrinv$-local confluent if and only if for all critical pairs $(\rho_1,\rho_2)$ with overlap $\rho$ for which $\actrinv(\rho)$, it is $\rho_1 \downarrow \rho_2$.
\end{theorem}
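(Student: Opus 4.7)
The plan is to invoke Theorem~\ref{thm:inv_confluence} with the specific invariant $\actrinv$ and use the preceding lemmas to collapse its joinability condition to the form stated. Because the invariant-based confluence theorem quantifies over every overlap together with all of its minimal invariant-repairing extensions, the natural structure is a case split according to whether the overlap itself already satisfies $\actrinv$.

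First I would handle the overlaps $\rho$ for which $\actrinv([\rho])$ holds. By the cited result from \cite[p.~80, lemma~13.13]{raiser_phdthesis10}, the empty CHR state $\rho_\emptyset = \langle \emptyset ; \top ; \emptyset \rangle$ is the unique minimal satisfying extension, so $\mathcal{M}^\actrinv([\rho]) = \{[\rho_\emptyset]\}$. Since extending $[\rho_1]$ and $[\rho_2]$ by $[\rho_\emptyset]$ leaves them unchanged, the hypothesis of Theorem~\ref{thm:inv_confluence} for such overlaps reduces to the plain joinability $[\rho_1] \downarrow [\rho_2]$ of the critical pair itself.

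Second, for overlaps $\rho$ with $\actrinv([\rho])$ failing, Lemma~\ref{lemma:min_elements} gives $\mathcal{M}^\actrinv([\rho]) = \emptyset$, so the universal quantification over $[\rho_\mathrm{m}] \in \mathcal{M}^\actrinv([\rho])$ in Theorem~\ref{thm:inv_confluence} is vacuous and such overlaps impose no joinability obligation. Notice also that in both cases $\mathcal{M}^\actrinv([\rho])$ is well-defined (either a singleton or empty), so the applicability hypothesis of Theorem~\ref{thm:inv_confluence} is met. Combining both cases, $\actrinv$-local confluence becomes equivalent to requiring $\rho_1 \downarrow \rho_2$ exactly for those critical pairs whose overlap $\rho$ satisfies $\actrinv(\rho)$, which is precisely the biconditional in the statement.

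The main obstacle I anticipate is essentially bookkeeping rather than genuine difficulty: I need to verify that extension by $[\rho_\emptyset]$ really is a no-op in the sense required by Theorem~\ref{thm:inv_confluence}, and that the two cases of the split are exhaustive and non-overlapping, so that the resulting characterisation is an ``if and only if'' and not merely an implication. Both points follow immediately from the definition of the merge/extension operation on CHR states and from Lemma~\ref{lemma:min_elements}, so no new technical machinery is required beyond what has already been developed.
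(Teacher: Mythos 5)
Your proposal is correct and follows essentially the same route as the paper: a case split on whether the overlap satisfies $\actrinv$, using Lemma~\ref{lemma:min_elements} to discharge the violating overlaps vacuously and the fact that the empty state is the unique minimal (and neutral) extension for the satisfying ones, all plugged into Theorem~\ref{thm:inv_confluence}. Your additional remark that $\mathcal{M}^\actrinv([\rho])$ is well-defined in both cases is a small point the paper leaves implicit, but otherwise the arguments coincide.
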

\begin{proof}
This follows directly from theorem~\ref{thm:inv_confluence} and lemma~\ref{lemma:min_elements} for overlaps where $\actrinv([\rho])$ does not hold. For overlaps with $\actrinv([\rho])$, the unique minimal element is the empty state $[\rho_\emptyset] := [\langle \emptyset ; \top ; \emptyset \rangle]$ which is the neutral element for state merging \cite[lemma~13.13, p.~80]{raiser_phdthesis10}. Therefore, if $\actrinv([\rho])$ holds, it suffices to test the critical pairs that stem from $[\rho]$ by theorem~\ref{thm:inv_confluence}.
\end{proof}

We now have a criterion to decide $\actrinv$-confluence of $\actrinv$-terminating CHR programs that have been translated from an ACT-R model. In the next theorem, we show that $\actrinv$-confluence of such CHR programs coincides with ACT-R confluence. Therefore, the confluence criterion is applicable to decide confluence of ACT-R models.

\begin{theorem}[confluence in ACT-R]
\label{thm:actr_confl_eq_chr_confl}
Let $M$ be an ACT-R model. Then $M$ is terminating and confluent if and only if $\fun{chr}(M)$ is $\actrinv$-terminating and $\actrinv$-confluent.
\end{theorem}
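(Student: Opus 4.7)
The plan is to leverage the sound and complete embedding of ACT-R in CHR from prior work together with the two results just established: Lemma~\ref{lemma:invariant_maintained} shows that $\actrinv$ is preserved by the transition relation of $\fun{chr}(M)$, and Theorem~\ref{def:invariants} characterises $\actrinv$ exactly as those CHR states that arise as translations of ACT-R states. These two facts combine to give a step-by-step bisimulation between ACT-R transitions on $M$ and CHR transitions on $\fun{chr}(M)$ restricted to states where $\actrinv$ holds. Both termination and confluence are properties of such step-by-step transition systems, so they transfer in both directions.

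First I would make the bisimulation precise. By the soundness of the translation, whenever $\sigma \mapsto \sigma'$ in ACT-R there is a CHR transition $[\fun{chr}(\sigma)] \mapsto [\fun{chr}(\sigma')]$ by a rule of $\fun{chr}(M)$; by completeness, every CHR transition $[\rho] \mapsto [\rho']$ with $\actrinv([\rho])$ has $\rho \equiv \fun{chr}(\sigma)$ for some $\sigma$, and there is an ACT-R transition $\sigma \mapsto \sigma'$ with $\rho' \equiv \fun{chr}(\sigma')$ (with $\actrinv([\rho'])$ by Lemma~\ref{lemma:invariant_maintained}). This gives a bijective correspondence between finite and infinite computations.

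For termination, an infinite ACT-R derivation from $\sigma$ yields an infinite CHR derivation from $[\fun{chr}(\sigma)]$ (all states satisfying $\actrinv$) and vice versa; hence $M$ is terminating iff $\fun{chr}(M)$ is $\actrinv$-terminating. For confluence, the bisimulation immediately lifts the reflexive-transitive closure: $\sigma_1 \downarrow \sigma_2$ in ACT-R iff $[\fun{chr}(\sigma_1)] \downarrow [\fun{chr}(\sigma_2)]$ in CHR. Since every CHR state $[\rho]$ reachable from a translated state satisfies $\actrinv$ and therefore equals $[\fun{chr}(\sigma)]$ for some ACT-R state $\sigma$, the universal quantification over states in the definition of confluence matches up: $M$ is confluent iff for all $\sigma, \sigma_1, \sigma_2$ with $\sigma \mapsto^* \sigma_i$ the states $\sigma_1, \sigma_2$ are joinable, which by the bisimulation is equivalent to $\actrinv$-confluence of $\fun{chr}(M)$.

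The main obstacle I anticipate is the careful handling of state equivalence classes: a CHR transition is defined up to $\equiv$, and we must argue that choosing different representatives $\rho \equiv \fun{chr}(\sigma)$ does not break the correspondence with a unique ACT-R state (up to ACT-R-level equality of chunk stores and cognitive states). This is essentially delegated to the soundness and completeness theorem of \cite{gall_tocl_2017}, but it is the step that requires most care in making explicit that the bijection is well-defined on equivalence classes, so that joinability and termination really do transfer without loss.
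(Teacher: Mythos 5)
Your proposal is correct and follows essentially the same route as the paper: both reduce the theorem to the soundness and completeness of the embedding, transferring termination via the correspondence of derivations and confluence via the equivalence of joinability, with Lemma~\ref{lemma:invariant_maintained} used in the backward direction to recover an ACT-R state from the joined CHR state. The paper's proof is merely a terser version of your argument (it does not spell out the bisimulation or the equivalence-class bookkeeping you rightly flag as the delicate step).
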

\begin{proof} 
$\actrinv$-termination is maintained through soundness and completeness. We now show that confluence for terminating models and their CHR counterparts coincides. Confluence is defined as $(\sigma \mapsto^* \sigma_1) \land (\sigma \mapsto^* \sigma_2) \rightarrow (\sigma_1 \downarrow \sigma_2)$
for all states $\sigma, \sigma_1, \sigma_2$. It remains to show that joinability in ACT-R and CHR are equivalent, i.e. $(\sigma_1 \downarrow \sigma_2) \leftrightarrow ([\fun{chr}(\sigma_1)] \downarrow [\fun{chr}(\sigma_2)]).$

\begin{description}
 \item[If-direction] 
 If $(\sigma_1 \downarrow \sigma_2)$, there is a state $\sigma'$ such that $\sigma_1 \utrans^* \sigma'$ and $\sigma_2 \utrans^* \sigma'$. Due to soundness and completeness of the embedding, we have that $[\fun{chr}(\sigma_1)] \mapsto^* [\fun{chr}(\sigma')]$ and $[\fun{chr}(\sigma_2)] \mapsto^* [\fun{chr}(\sigma')]$. 

  \item[Only-if-direction] 
 This is analogous. We just have to construct the ACT-R state from the joined CHR state $[\rho']$. Since $\actrinv([\rho'])$ holds by lemma~\ref{lemma:invariant_maintained}, this state exists.
\end{description}
\end{proof}

\subsection{Example: Counting}
\label{sec:example:counting}

We continue our example~\ref{ex:counting}. We assume that each number chunk only appears in at most one \emph{order} chunk at \emph{first} or \emph{second} position. This means that the model has learned a stable order on the numbers and hence requests to the declarative module are deterministic.
It is clear that this example model terminates for finite declarative memories. Therefore, we can apply our confluence criterion. 

The rule can overlap with itself, e.g. $\langle \constr{delta}(D), \constr{delta}(D'), \ldots ; \ldots ; \ldots  \rangle.$ This state invalidates invariant $\actrinv_{\ref{def:invariants:unique_chunk_store}}$ and hence is not part of the confluence test. Another overlap is $\langle \constr{delta}(D), \constr{gamma}(g,C_g,0), \constr{gamma}(g,C_g',0), \dots ; \dots ;  \dots \rangle.$
It violates invariant $\actrinv_{\ref{def:invariants:func_cogstate}}$, because it has two $\constr{gamma}$ constraints for the same buffer. 

All overlaps consist of the following built-in store:
\begin{align*}
\langle  \constr{delta}(D), \dots ; &\fun{chunk}(C_g, g, \{ (\fun{current}, X) \}) ~\constr{in}~ D \\
                             \land~ &\fun{chunk}(C_g, g, \{ (\fun{current}, X') \}) ~\constr{in}~ D, \dots ; \{ D, X, X', \dots \} \rangle.
\end{align*}
By invariant $\actrinv_{\ref{def:invariants:unique_chunk_ids}}$ it must be $X = X'$, because otherwise there were two different $\fun{chunk}$ terms in the same chunk store with the same chunk identifier.

The overlap $\langle H ; G ; \mathbb{V}\rangle$ that only consists of the head and guard of the rule where $\mathbb{V}$ contains all variables of $H$ and $G$ is joinable, because we assumed determinism of requests, i.e. there is only one possible result chunk for each request. It can be seen that all possible overlaps in this small example invalidate the ACT-R invariant $\actrinv$ or are joinable. Therefore, the model consisting only of this one counting rule is confluent. If we would assume an agent that has not learned a stable order of numbers, yet, i.e. there are numbers with different successors, the model would not be confluent. The confluence test constructs minimal representations of the states that are not joinable, i.e. giving an insight to the reason why a model is not confluent. This allows to decide whether the model has the desired behavior when it comes to different available strategies.

\section{Related Work}

There exist CHR embeddings of other rule-based approaches. The results on invariant-based confluence analysis have been used successfully to the embedding of graph transformation systems in CHR \cite{raiser_gts_chr_2007,raiser_analysis_gts_2011}.

In the context of ACT-R, there are -- to the best of our knowledge -- no other approaches that deal with confluence so far. There have been other approaches to formalize the architecture with the aim to reason about cognitive models. For instance, F-ACT-R \cite{albrecht_2014a,albrecht_2014c} formalizes the architecture of ACT-R to simplify comparison of different models or to use model checking techniques. In \cite{said2016applying} mathematical reformulations of ACT-R models are used for parameter optimization by mathematical optimization techniques.

\section{Conclusion}

In this paper, we have shown a decidable confluence test for the abstract operational semantics of ACT-R. A confluence test can help to improve ACT-R models by identifying the rules that inhibit confluence. This enables the modeler to decide about the correct behavior of the model regarding competing strategies. In our approach, we use the sound and complete embedding of ACT-R in CHR to apply the invariant-based confluence criterion for CHR to reason about ACT-R confluence, since standard CHR confluence is too strict. 

We have defined the ACT-R invariant $\actrinv$ on CHR states such that it is satisfied for all states that stem from a valid ACT-R state. The first main result is a decidable criterion for the ACT-R invariant (theorem~\ref{def:invariants}). 

Furthermore, the theoretical foundations for applicability of CHR invariant-based confluence for the ACT-R invariant $\actrinv$ are established. This leads to the second main result: an invariant-based CHR $\actrinv$-confluence test (theorem~\ref{thm:actr_local_confluence}). 

Eventually, it is shown that $\actrinv$-confluence coincides with ACT-R confluence (theorem~\ref{thm:actr_confl_eq_chr_confl}). This makes our CHR approach applicable to decide ACT-R confluence. The criterion is decidable as long as the constraint theories behind the actions are decidable, because the invariant is decidable and the preconditions for the invariant-based confluence test are satisfied in the context of ACT-R. 

For the future, we want to investigate how the approach can be extended to confluence modulo equivalence \cite{christiansen_confmeq_2016}, since ACT-R confluence can be too strict due to possibly differing chunk identifiers in the processing of the production rules. An equivalence relation on chunk networks that is defined as a special form of graph isomorphism could abstract from chunk identifiers making a chunk store more declarative. By summarizing possible outcomes of a model in equivalence classes, confluence modulo equivalence can also help to reason about correctness of a model. Confluence modulo this equivalence relation would then guarantee that the model always gives a result of a certain kind defined by the equivalence class. For instance, it would be possible to check if a model always yields a chunk of a certain type, e.g. a number or an order chunk.

Reasoning about requests to modules that appear in a confluence proof can be extended by specific constraint theories on the modules that integrate domain-specific knowledge about the model. This idea can be extended by allowing for model-specific constraint theories. For instance, the integration of domain-specific knowledge on chunk types in the context of a particular cognitive model could improve reasoning about module requests in such models.

%
%

\bibliographystyle{splncs03}
\bibliography{bib}



\end{document}